\DeclareMathOperator{\ds}{\mathit{dnf\_size}}
\DeclareMathOperator{\e}{\mathit{ess}}
\DeclareMathOperator{\cs}{\mathit{cnf\_size}}
\DeclareMathOperator{\mi}{\mathit{min\_imp}}
\begin{document}
\begin{frontmatter}
\newtheorem{theorem}{Theorem}
\newtheorem{lemma}{Lemma}
\newtheorem{proposition}{Proposition}
\newtheorem{corollary}{Corollary}
\newdefinition{property}{Property}
\newdefinition{definition}{Definition}

\title{On the Gap Between $\e(f)$ and $\cs(f)$}

\author[lh] {Lisa Hellerstein}
\ead{hstein@poly.edu}
\author[dk]{Devorah Kletenik}
\ead{dkletenik@cis.poly.edu}
\address{Polytechnic Institute of NYU, 6 Metrotech Center, Brooklyn, N.Y., 11201}

\begin{abstract}
Given a Boolean function $f$, the quantity $\e(f)$ denotes the largest
set of assignments that falsify $f$, no two of which falsify a common 
implicate of $f$.  
Although $\e(f)$ is clearly a lower bound on $\cs(f)$ (the minimum
number of clauses in a CNF formula for $f$),
\u{C}epek et al.\ showed
it is not, in general,  a tight lower bound~\cite{cepek}. They 
gave examples of functions $f$ for which there is a small
gap between $\e(f)$ and $\cs(f)$.
We demonstrate significantly larger gaps.  We show that
the gap can be exponential in $n$ for
arbitrary Boolean functions, and $\Theta(\sqrt{n})$ for Horn functions,
where $n$ is the number of variables of $f$.  
We also introduce a natural extension of the quantity $\e(f)$,
which we call $\e_k(f)$, which is the largest set of assignments, no $k$ of which
falsify a common implicate of $f$.
\end{abstract}

\begin{keyword}
DNF \sep CNF \sep $\e(f)$ \sep Horn functions \sep formula size
\end{keyword}

\end{frontmatter}

\section{Introduction}

Determining the smallest CNF formula for a given Boolean function $f$ 
is a difficult problem that has been studied for many years. (See~\cite{twolevel} for an overview of relevant literature.)
Recently, \u{C}epek et al. introduced a combinatorial quantity, $\e(f)$,
which lower bounds $\cs(f)$, the minimum number of clauses in a
CNF formula representing $f$~\cite{cepek}.  
The  quantity $\e(f)$ is equal to the size of the largest set of
falsepoints of $f$, no two of which falsify the same implicate of $f$.
\footnote{This definition immediately follows from Corollary 3.2 of \u{C}epek et al.~\cite{cepek}.}

For certain subclasses of Boolean functions,
such as the monotone (i.e., positive) functions, $\e(f$) is equal to $\cs(f)$.  However,
\u{C}epek et al. demonstrated that
there can be a gap between $\e(f)$ and $\cs(f)$.
They  constructed a Boolean function $f$ on $n$
variables such that there is a multiplicative gap of size $\Theta(\log n)$
between $\cs(f)$ and $\e(f)$.\footnote {Their function is actually 
defined in terms of 
two parameters $n_1$ and $n_2$. 
Setting
them to maximize the 
multiplicative gap between $\e(f)$ and $\cs(f)$, 
as a function of the number of variables $n$,
yields a gap of size 
$\Theta(\log n)$.} 
Their constructed function $f$ is a Horn function.
Their results leave open the possibility that $\e(f)$ could be
a close approximation to $\cs(f)$. 

We show that this is not the case.
We construct a Boolean function $f$ on $n$ variables such
that there is a multiplicative gap of size $2^{\Theta{(n)}}$
between $\cs(f)$ and $\e(f)$.
Note that such a gap could not be larger than $2^{n-1}$, since 
$\cs(f) \leq 2^{n-1}$ for all functions $f$.

We also construct a Horn function $f$ such that there is a multiplicative
gap of size $\Theta(\sqrt{n})$ between 
$\cs(f)$ and $\e(f)$.
We show that no gap larger than $\Theta(n)$ is possible.

If one expresses the gaps as a function of $\cs(f)$, rather than
as a function of the number of variables $n$, then the gap we obtain
with both the constructed non-Horn and Horn functions $f$ is $\cs(f)^{1/3}$.
Clearly, no gap larger than $\cs(f)$ is possible.

We briefly explore a natural generalization of the quantity $\e(f)$, which 
we call
$\e_k(f)$, which is the largest set of falsepoints, no $k$ of which
falsify a common implicate of $f$.  The quantity $\e(f)/(k-1)$ is 
a lower bound
on CNF-size, for any $k \geq 2$.

The above results concern the size of CNF formulas.
Analogous results hold for DNF formulas by duality.

\section{Preliminaries}
\subsection{Definitions}

A Boolean function $f(x_1, \dots ,x_n)$ is a mapping ${\{0,1\}}^n \rightarrow \{0,1\}$. 
(Where it does not cause confusion, we often use the word ``function'' to refer to a Boolean function.)
A variable $x_i$ and its negation $\neg{x_i}$ are \emph{literals} (positive and negative
respectively).
A \emph{clause} is a disjunction ($\vee$) of literals.
A \emph{term} is a conjunction ($\wedge$) of literals.
A \emph{CNF} (conjunctive normal form) formula is a formula of the form
$c_0 \wedge c_1 \wedge \dots c_k$, where each $c_i$ is a clause.
A \emph{DNF} (disjunctive normal form) formula is a formula of the form
$t_0 \vee t_1 \vee \dots t_k$, where each $t_i$ is a term.

A clause $c$ containing variables from $X_n = \{x_1, \ldots, x_n\}$
is an \emph{implicate} of $f$ 
if for all $x \in \{0,1\}^n$,
if $c$ is falsified by $x$ then $f(x) = 0$.
A term $t$ containing variables from 
$X_n$
is an \emph{implicant} of function $f(x_1,\ldots,x_n)$ if for all $x \in \{0,1\}^n$,
if $t$ is satisfied by $x$ then $f(x) = 1$.
%An implicate (implicant) is \emph{prime} if removing any literal from it would cause it to become a non-implicate (non-implicant).
%{\tt Do we really need this and is this a standard definition.  A DNF (CNF) formula $\phi$ representing a function $f$ is 
%called prime if each term (clause) of $\phi$ is a prime implicant (implicate) of $f$.} 
 
% A DNF (CNF) is \emph{minimal} if no literal or term(clause) can be removed without changing the function.
% There are two standard definitions of what it means
% for a DNF(CNF) to be\emph {minimum} in comparison to all other DNFs(CNFs) that define the same function: 1) if it contains the fewest terms(clauses) or 2) it contains the fewest literals. We will use the (more conventional) former definition.

We define the {\em size} of a CNF formula to be the number of its clauses,
and the {\em size} of a DNF formula to be the number of its terms.

Given a Boolean function $f$,
$\cs(f)$ is the size of the smallest CNF formula representing $f$.
Analogously, $\ds(f)$ is the size of the smallest DNF formula representing $f$.

An assignment $x \in \{0,1\}^n$ is a {\em falsepoint} of $f$ if
$f(x) = 0$, and is a {\em truepoint} of
$f$ if $f(x) = 1$.
We say that a clause $c$ {\em covers} a falsepoint $x$ of $f$ if $x$ falsifies $c$. A term $t$ {\em covers} a truepoint $x$ of $f$ if $x$ satisfies $t$.

A CNF formula $\phi$ representing a function $f$ forms a {\em cover} of the falsepoints of $f$,
in that each falsepoint of $f$ must be covered by at least one clause of $\phi$.
Further, if $x$ is a truepoint of $f$, then no clause of $\phi$ covers $x$.
Similarly, a DNF formula $\phi$ representing a function $f$ forms a {\em cover} of the truepoints of $f$, 
in that each truepoint of $f$ must be covered by at least one term of $\phi$.
Further, if $x$ is a falsepoint of $f$, then no term of $\phi$ covers $x$.

Given two assignments $x,y \in \{0,1\}^n$, we write
$x  \leq y$ if $\forall i, x_i \leq y_i$. An assignment $r$ \emph{separates} two assignments $p$ and $q$ if $\forall i$, $p_i = r_i$ or $q_i = r_i$.

A \emph{partial} function $f$ maps ${\{0,1\}}^n$ 
to  $\{0,1, \ast\}$, where $\ast$ indicates that the value of $f$ is not defined on the assignment. A Boolean formula $\phi$ is 
\emph{consistent} with a partial function $f$ 
if $\phi(a) = f(a)$ for all $a \in \{0,1\}^n$ where $f(a) \neq \ast$. If $f$ is a partial Boolean function, then $\cs(f)$ and $\ds(f)$ are the
size of the smallest CNF and DNF formulas consistent with the $f$, respectively.

A Boolean function $f(x_1, \ldots, x_n)$ is \emph{monotone} if
for all $x,y \in \{0,1\}^n$, if $x \leq y$ then
$f(x) \leq f(y)$. 
A Boolean function is \emph{anti-monotone} if for all $x,y \in \{0,1\}^n$, if $x \geq y$ then
$f(x) \leq f(y)$. 

A DNF or CNF formula is \emph{monotone} if it contains no negations; it is anti-monotone if
all variables in it are negated.
A CNF formula is a Horn-CNF if each clause contains at most one variable without a negation.
If each clause contains exactly one variable without a negation it is a \emph{pure} Horn-CNF.
A \emph{Horn function} is a Boolean function that can be represented by a Horn-CNF.
It is a 
\emph{pure Horn function} if it can be represented by a pure Horn-CNF. 
Horn functions are a generalization of anti-monotone
functions, and have applications in artficial intelligence~\cite{norvig}.

We say that two falsepoints, $x$ and $y$, of a function $f$ are \emph{independent} if no implicate of
$f$ covers both $x$ and $y$.
Similarly, we say that two truepoints $x$ and $y$ of a function $f$ are \emph{independent} if no implicant of
$f$ covers both $x$ and $y$.  
We say that a set $S$ of falsepoints (truepoints) of $f$ is independent if all pairs of falsepoints (truepoints) in $S$ are independent.

% DNFs are also NP-hard to minimize, but they do represent an ``easier'' form of DNFs; tautology and equivalence can both be determined in polynomial time. As a result, the \emph{Min-DNF} problem on DNF input $\phi$ is NP complete as opposed to $\Sigma_{p}^2$ complete. Additionally, Horn functions can also be easier to learn under a certain model in computational learning theory \cite{angluinAlgorithm}.

% 
% Horn functions are of great value in the artificial intelligence world since they can be used as a framework for knowledge representation and reasoning. Because they can be used to model real-world situations easily, they frequently form the base of expert rule-based systems for decision making. In fact, the AI formal logic programming language PROLOG is designed to represent Horn functions \cite{norvig}. Horn functions are also used to represent functional dependencies in database systems \cite{sagiv}, and have applications in integer programming and some graph problems \cite{bookDraft}. As a result of the use of Horn functions in a variety of applications, Horn minimization is an important practical problem \cite{hornminimization}.
% 

The {\em set covering problem} is as follows: Given a ground set $A = \{e_1, \ldots, e_m\}$ of
elements, a set ${\cal S} = \{S_1, \ldots, S_n\}$ of subsets of $A$, and a positive integer $k$,
does there exist ${\cal S}' \subseteq {\cal S}$ such that $\bigcup_{S_i \in {\cal S}'} = {\cal S}$ and $|{\cal S}'| \leq k$?
Each set $S_i \in {\cal S}$ is said to cover the elements it contains.  Thus the set covering
problem asks whether
$A$ has a ``cover'' of size at most $k$.

A set covering instance is $r$-uniform, for some $r > 0$, if
all subsets $S_i \in {\cal S}$ have size $r$.

Given an instance of the set covering problem, we say that a 
subset $A'$ of ground set $A$
is \emph{independent} if no two elements of $A'$ are contained in a common subset $S_i$ of ${\cal S}$.

% For more background on monotone and Horn formulas, see {\tt fill this in or delete}.
% For more background on set covering

\section{The quantity $\e(f)$}

We begin by restating the definition of $\e(f)$ in terms of independent falsepoints. We also introduce an analogous
quantity for truepoints. (The notation $\e^d$ refers to the fact that this is a dual definition.)

\begin{definition}
  Let $f$ be a Boolean function.  The quantity $\e(f)$ denotes the
size of the largest independent set of falsepoints of $f$.  The quantity $\e^d(f)$ denotes
the largest independent set of truepoints of $f$.
\end{definition}

As was stated above,
\u{C}epek et al. introduced the quantity $\e(f)$ as a lower bound on $\cs(f)$.
The fact that $\e(f) \leq \cs(f)$ follows easily from the above definitions, and
from the following facts: (1) if $\phi$ is a CNF formula representing $f$, then every falsepoint of $f$ must
be covered by some clause of $\phi$, and (2) each clause of $\phi$ must be an 
implicate of $f$.

Let $f'$ denote the function that is the complement of $f$, i.e. $f'(a) = \neg f(a)$ for all assignments
$a$. Since, by duality, $\e(f') = \e^d(f)$ and $\cs(f') = \ds(f)$, it
follows that $\e(f') \leq \ds(f)$.

\begin{property}
\cite{cepek} Two falsepoints of $f$,  $x$ and $y$, are independent iff there exists a truepoint $a$ of $f$
that separates $x$ and $y$.
\end{property}

%The following property is simple and has been used extensively in work on CNF (or DNF) size.
%If two falsepoints, $x$ and $y$, of $f$ are independent, then there exists a truepoint $a$ of $f$
%that lies ``between'' $x$ and $y$.  By lying ``between'' we mean that
%$a$ is such that for all $i$ such that $x_i = y_i$, it follows that $a_i = x_i = y_i$.
%{\tt Provide proof????  Supply reference???}

% An analogous property holds for two independent truepoints of $f$.

% \subsection{Computational complexity and $\e(f)$}

% We know from the constructions of \u{C}epek et al. that there
% are functions $f$ such that $\es(f)$ is strictly smaller than $\cs(f)$.
% Even if one did not know of these functions, there are 
% complexity-theoretic reasons to believe they must exist.

Consider the following decision problem, which we will call \emph{ESS}: 
``Given a CNF formula representing a Boolean function $f$,
and a number $k$, is $\e(f) \leq k$?''
Using Property 1,
this problem is easily shown to be in co-NP~\cite{cepek}.

% Now consider the decision problem "Given a Boolean formula representing
% a Boolean function $f$, and a number $k$, is $\cs(f) \leq k$?"
% This problem is $\Sigma_2^P$ complete~\cite{umans}.
% It immediately follows that unless $\Sigma_2^P = co-NP$,
% there are functions $f$ such that
% $\e(f) \neq \cs(f)$. 

We can combine 
the fact that \emph{ESS} is in co-NP with results on the hardness
of approximating CNF-minimization, 
to get the following preliminary result, based on a complexity-theoretic
assumption.

\begin{proposition}
If co-NP $\neq \Sigma_2^P$, then for some $\gamma > 0$,
there exists an infinite set of Boolean functions $f$ such that
$\e(f) n^{\gamma} < \cs(f)$, where $n$ is the number of variables of $f$.
\end{proposition}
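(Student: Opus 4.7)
The plan is to argue by contradiction, combining the co-NP upper bound on \emph{ESS} (noted just before the proposition) with a known hardness-of-approximation result for $\cs(f)$. Suppose, toward a contradiction, that the conclusion fails: for every constant $\gamma > 0$, all but finitely many Boolean functions $f$ satisfy $\cs(f) \leq \e(f)\, n^\gamma$.

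Fix any such $\gamma$ and consider the promise problem that, given a CNF representation of $f$ and an integer $k$, accepts when $\cs(f) \leq k$ and rejects when $\cs(f) > k\, n^\gamma$. Under the contradiction hypothesis (together with hardcoded behaviour on the finitely many exceptional functions), this promise problem reduces to \emph{ESS}: if $\cs(f) \leq k$ then $\e(f) \leq \cs(f) \leq k$, while if $\cs(f) > k\, n^\gamma$ then $\e(f) \geq \cs(f)/n^\gamma > k$. Hence the co-NP question ``is $\e(f) \leq k$?'' decides the promise problem, placing it in co-NP.

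To finish, I would invoke an unconditional $\Sigma_2^P$-hardness statement for approximate CNF-minimization of the following form: for some $\gamma > 0$, distinguishing $\cs(f) \leq k$ from $\cs(f) > k\, n^\gamma$ is $\Sigma_2^P$-hard under polynomial-time many-one reductions. Such a theorem is essentially Umans' inapproximability result for $\Sigma_2^P$-minimization problems applied to CNF-size. Choosing $\gamma$ to match this hardness and then composing with the reduction above yields $\Sigma_2^P \subseteq$ co-NP, hence co-NP $= \Sigma_2^P$, contradicting the hypothesis.

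The main obstacle is selecting the right reference: we need a hardness result that (i) delivers $\Sigma_2^P$-hardness (so the collapse is precisely co-NP $= \Sigma_2^P$ rather than a weaker or incomparable consequence) and (ii) expresses the inapproximability gap in terms of the number of variables $n$, not the total bit-length of the input CNF, since worst-case CNFs can be exponentially larger than $n$. The standard $\Sigma_2^P$-minimization reductions produce inputs of size polynomial in $n$, so the gap does translate into an $n^\gamma$ statement; once this is verified, the remainder of the argument is bookkeeping about the finitely many excluded functions.
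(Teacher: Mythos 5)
Your proposal is correct and follows essentially the same route as the paper: assume the gap fails for all but finitely many functions, use the co-NP algorithm for \emph{ESS} to approximate \emph{Min-CNF} within $n^{\gamma}$, and invoke Umans' $\Sigma_2^P$-completeness of that approximation (stated in terms of the number of variables $n$, which is exactly the reference the paper cites) to conclude $\Sigma_2^P = $ co-NP. The only difference is presentational: you phrase it as a promise-problem reduction, while the paper speaks directly of an approximation algorithm, but the content is identical.
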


\begin{proof}

Consider the \emph{Min-CNF} problem (decision version):
Given a CNF formula representing a Boolean function $f$,
and a number $k$, is $\cs(f) \leq k$?
Umans proved that it is $\Sigma_2^P$-complete to approximate this problem to
within a factor of $n^{\gamma}$, for some $\gamma > 0$, where $n$ is the 
number of variables of $f$~\cite{Umans99}.  
(Approximating this problem to within
some factor $q$ means answering ``yes'' whenever $\cs(f) \leq k$, and answering
``no'' whenever $\cs(f) > kq$.   If $k < \cs(f) \leq kq$, either
answer is acceptable.)

Suppose $\e(f) n^{\gamma} \geq \cs(f)$ for all Boolean functions  $f$.
Then one can approximate Min-CNF to within a factor of $n^{\gamma}$
in co-NP by simply using the co-NP algorithm for \emph{ESS} to determine whether
$\e(f) \leq k$.  
Even if $\e(f) n^{\gamma}\geq \cs(f)$ for a finite set $S$ of functions,
one can still approximate Min-CNF to within a factor of $n^{\gamma}$
in co-NP, by simply handling the finite
number of functions in $S$ explicitly as special cases.
Since approximating \emph{Min-CNF} to within this factor is $\Sigma_2^P$-complete,
$\Sigma_2^P \subseteq$ co-NP.  
By definition, co-NP$ \subseteq \Sigma_2^P$, so $\Sigma_2^P =$ co-NP.  
\end{proof}

The non-approximability result of Umans for \emph{Min-CNF}, used in the above
proof, is expressed in terms of the number of variables $n$ of the function.
Umans also showed~\cite{Umans98} that it is $\Sigma_2^P$ complete to approximate \emph{Min-CNF} to within
a factor of $m^{\gamma}$, for some $\gamma \geq 0$, where
$m = \cs(f)$.  Thus we can also prove that,
if NP $\neq$ $\Sigma_2^P$, then for some $\gamma > 0$,
there is an infinite set 
of functions $f$ such that $\e(f) < \cs(f)^{1-\gamma}$.

The assumption that $\Sigma_2^P \neq$ co-NP is not unreasonable, so 
we have grounds to believe that there is an infinite set of functions
for which the gap between $\e(f)$ and $\cs(f)$ is greater than $n^\gamma$
(or $\cs(f)^{\gamma}$) for some $\gamma$.
Below, we will explicitly construct such sets
with larger gaps than that of Proposition 1, and with no complexity theoretic
assumptions.

We can also prove a proposition similar to Proposition 1
for Horn functions, using a different complexity theoretic assumption.
(Since the statement of the proposition includes
a complexity class parameterized by the standard input-size parameter $n$,
we use $N$ instead of $n$ to denote the number of inputs to a Boolean
function.)

\begin{proposition}
If NP $\not\subseteq$ co-NTIME($n^{polylog(n)})$, then for some $\epsilon$
such that $0 < \epsilon < 1$,  there exists 
an infinite set of Horn functions $f$ such that
$\frac{\cs(f)}{\e(f)} \geq 2^{\log^{ 1-\epsilon} N}$, where $N$ is the number
of input variables of $f$.
\end{proposition}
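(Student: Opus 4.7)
The plan is to mirror the proof of Proposition 1, substituting for the Umans inapproximability bound on general Min-CNF a corresponding inapproximability bound for the Horn-CNF minimization problem, call it \emph{Min-Horn-CNF}: given a Horn CNF $\phi$ on $N$ variables representing a Horn function $f$, and a positive integer $k$, decide whether $\cs(f) \leq k$. The key fact I would invoke is a known inapproximability result of the form: unless NP $\subseteq$ co-NTIME($n^{polylog(n)}$), Min-Horn-CNF cannot be approximated in polynomial time within a factor of $2^{\log^{1-\epsilon} N}$ for some $0 < \epsilon < 1$. Bounds of this shape for Horn minimization are known, typically obtained by gap-preserving reductions from Set-Cover inapproximability, and the underlying complexity hypothesis is weaker than in Proposition 1 because in the Horn setting one can verify in polynomial time that a candidate Horn CNF of size at most $k$ represents the same function as the input, so Min-Horn-CNF lies in NP rather than $\Sigma_2^P$.

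Next, I would run the contrapositive argument exactly as in Proposition 1. Assume for contradiction that for \emph{every} positive $\epsilon$, the inequality $\cs(f)/\e(f) < 2^{\log^{1-\epsilon} N}$ holds for all but finitely many Horn functions $f$. Then, given a Horn CNF on $N$ variables and an integer $k$, one can approximate Min-Horn-CNF within factor $2^{\log^{1-\epsilon} N}$ in co-NP by running the co-NP decision procedure for \emph{ESS} (which, via Property 1, tests whether $\e(f) \leq k$) and outputting its answer. The finite exceptional set of Horn functions where the gap assumption fails is handled as hardcoded special cases, exactly as in the proof of Proposition 1. Since co-NP $\subseteq$ co-NTIME($n^{polylog(n)}$), this places the NP-hard approximation problem inside co-NTIME($n^{polylog(n)}$), yielding NP $\subseteq$ co-NTIME($n^{polylog(n)}$) and contradicting the hypothesis.

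I expect the main obstacle to be locating in the literature an inapproximability result for Horn-CNF minimization with precisely the form required: a factor of $2^{\log^{1-\epsilon} N}$ under the hypothesis NP $\not\subseteq$ co-NTIME($n^{polylog(n)}$). If the published statement is under a slightly different hypothesis (for instance NP $\not\subseteq$ DTIME($n^{polylog(n)}$), or with the factor expressed as $2^{\log^{\delta} N}$ for a particular $\delta$), a minor adjustment to the hypothesis or the constant $\epsilon$ in the proposition statement will be needed. A secondary care point is to ensure that $\e(f)$ is being computed with respect to the Horn function $f$ represented by the input formula, and that the input size $N$ used in the inapproximability bound matches the variable count rather than the bit length of $\phi$. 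Once the citation is in place, the rest of the argument is a mechanical adaptation of Proposition 1.
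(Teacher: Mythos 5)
Your proposal follows essentially the same route as the paper: the paper cites exactly the inapproximability result you anticipate (Bhattacharya et al.\ give a quasi-polynomial-time many-one reduction from an NP-complete problem to approximating \emph{Min-Horn-CNF} within $2^{\log^{1-\epsilon}N}$), and then runs the Proposition~1 contrapositive using the NP-verifiability of $\e(f)\geq k$, handling the finite exceptional set as special cases. The only detail to tighten is the one you flagged: the reduction runs in time $n^{polylog(n)}$ rather than polynomial time, which is precisely why the conclusion is NP $\subseteq$ co-NTIME($n^{polylog(n)}$) rather than NP $\subseteq$ co-NP.
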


\begin{proof}
Consider the following \emph{Min-Horn-CNF} problem (decision version):
Given a Horn-CNF $\phi$ representing a Horn function $f$, and an integer $k \geq 0$, 
is $\cs(f) \leq k$?
Bhattacharya et al. \cite{Bhattacharya_onapproximate} showed that there exists
a deterministic,
many-one reduction (i.e. a Karp reduction), running in time $O(n^{polylog(n)})$ (where $n$ is
the size of the input), from an NP-complete problem to the problem of approximating \emph{Min-Horn-CNF}
to within a factor of $2^{\log^{ 1-\epsilon} N}$, where $N$ is the number of input variables of $f$ .

Suppose that $\frac{\cs(f)}{\e(f)}$ is at most $2^{\log^{ 1-\epsilon} N}$ for all 
Boolean functions $f$. 
It is well known that given a Horn-CNF $f$, the size of the smallest
(functionally) equivalent Horn-CNF is precisely $\cs(f)$.
Thus given a Horn-CNF $\phi$ on $N$
variables, and a number $k$, if there does not exist a 
Horn-CNF equivalent to $\phi$ of size
less than $2^{\log^{ 1-\epsilon} N} \times k$, this can be verified 
non-deterministically in polynomial time (by verifying that $\e(f) \geq k$).
Thus the complement of \emph{Min-Horn-CNF} is approximable to within a 
factor of $2^{\log^{ 1-\epsilon} N}$, in deterministic time $n^{polylog(n)}$
(where $n$ is the size in bits of the input Horn-CNF, and
$N$ is the number of variables in the input Horn-CNF).
Combining this fact with the reduction of Bhattacharya et al.
implies that the complement of an NP-complete problem
can be solved in non-deterministic time $n^{polylog(n)}$.
Thus NP is contained in co-NTIME($n^{polylog(n)}$).
The same holds if 
$\frac{\cs(f)}{\e(f)}$ is at most $2^{\log^{ 1-\epsilon} n}$ for all 
but a finite set of Boolean functions $f$.
\end{proof}

\section{Constructions of functions with large gaps between $\e(f)$ and $ \cs(f)$}

We will begin by constructing a function $f$, 
such that $\frac{\cs(f)}{\e(f)} = \Theta(n)$.  This 
is already a larger gap than the
multiplicative gap of $\log(n)$ achieved by the construction of \u{C}epek et al.~\cite{cepek}, and
the gap of $n^{\gamma}$ in Proposition 1.  We describe the
construction of $f$, prove bounds on $\cs(f)$ and $\e(f)$, and
then prove that the ratio $\frac{\cs(f)}{\e(f)}$ = $\Theta(n)$. 

We will then show how to modify this construction 
to give a function $f$ 
such that $\frac{\cs(f)}{\e(f)} = 2^{\Theta(n)}$, thus increasing the gap to be exponential in $n$.

At the end of this section, we will explore
$\e_k(f)$, our generalization of $\e(f)$.

% Finally, in Section \ref{sec:Horn}, we give our Horn function constructions.

\subsection{Constructing a function with a linear gap}
\label{linear}
\begin{theorem} 
There exists a function $f(x_1, \ldots, x_n)$
such that $\frac{\cs(f)}{\e(f)} = \Theta(n)$.
\end{theorem}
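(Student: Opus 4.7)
The plan is to exhibit an explicit Boolean function $f(x_1,\dots,x_n)$ together with matching bounds on $\cs(f)$ and $\e(f)$ that differ by a factor of $\Theta(n)$. My guiding intuition is that the bipartite incidence structure between the falsepoints of $f$ and its prime implicates can be engineered to mimic a set-cover instance with a polynomial cover-to-independent-set gap, since $\cs(f)$ is precisely the minimum cover and $\e(f)$ the maximum independent set of this incidence matrix. Realizing a linear combinatorial gap in an explicit $f$ then gives the theorem.

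Concretely, I would define $f$ by specifying its falsepoints directly, choosing them so that (i) every pair among a large collection of falsepoints admits a truepoint separator in the sense of Property~1, driving $\e(f)$ down to a small value, and (ii) every prime implicate of $f$ covers only a small bounded number of falsepoints, forcing $\cs(f)$ to be large. A natural way to balance these requirements is to partition the $n$ coordinates into a few \emph{shared} indices that give rise to the separator truepoints and a larger set of \emph{unique} indices that pin down individual falsepoints.

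The technical steps are then to enumerate the prime implicates of $f$ via the subcube characterization (a clause $c$ is a prime implicate iff the affine subcube it defines consists entirely of falsepoints of $f$ and is maximal with this property); to derive the upper bound on $\e(f)$ by exhibiting, for each pair of falsepoints, an explicit truepoint separator using Property~1; and to derive the lower bound on $\cs(f)$ by a counting/covering argument, using the fact, extracted from the subcube enumeration, that each prime implicate covers only a bounded number of falsepoints. Combining the two estimates yields $\cs(f)/\e(f) = \Theta(n)$, with the matching upper bound on $\cs(f)/\e(f)$ provided by the explicit CNF one writes down for $f$.

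The main obstacle is the $\cs(f)$ lower bound: one must verify globally, over all candidate clauses $c$, that either $c$ is not an implicate or it covers only a small constant number of falsepoints. This is where the construction must be calibrated carefully, placing truepoints throughout $\{0,1\}^n$ so that no large subcube of falsepoints is available to be captured by a single clause. Once this structural lemma is in hand, the $\Theta(n)$ ratio follows immediately, and the construction can then serve as a warmup for the subsequent modification yielding the $2^{\Theta(n)}$ gap promised in the introduction.
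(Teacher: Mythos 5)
Your high-level framing --- that $\cs(f)$ is a minimum cover and $\e(f)$ a maximum independent set of the falsepoint/implicate incidence structure, and that one should engineer this structure to mimic a set-cover instance with a linear cover-to-independent-set gap --- is exactly the paper's starting point (the paper uses the $2$-uniform instance consisting of all $2$-element subsets of an $m$-element ground set, which has cover number $\lceil m/2\rceil$ and maximum independent set of size $1$). But there are two serious problems with your plan. First, you have Property~1 backwards. Two falsepoints are independent (i.e., can both belong to the set witnessing $\e(f)$) \emph{iff} a truepoint separates them. So arranging that ``every pair among a large collection of falsepoints admits a truepoint separator'' makes that collection an independent set and drives $\e(f)$ \emph{up}, not down; likewise, exhibiting truepoint separators for pairs yields a \emph{lower} bound on $\e(f)$, not the upper bound you need. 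To make $\e(f)$ small you must show that in any large set of falsepoints some pair is covered by a common implicate, i.e., admits \emph{no} truepoint separator. As written, your two design goals (every prime implicate covers $O(1)$ falsepoints, and all pairs are separated) would produce a function with $\cs(f)$ and $\e(f)$ both large and no guaranteed gap.

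Second, even with the direction corrected, the proposal never actually produces the function; it correctly identifies the hard part (verifying globally that no unintended implicate covers many falsepoints, so that the intended incidence structure is the true one) and then leaves it as an open obstacle. That difficulty is precisely what the paper's machinery exists to solve: it does not build $f$ ad hoc, but (working dually, with DNF size, implicants, and truepoints) applies Gimpel's reduction to turn the $2$-uniform set-cover instance into a \emph{partial} function $\hat f$ with $\ds(\hat f)=\lceil m/2\rceil$, and then uses the Allender et al.\ conversion to a total function $\hat g$ on $n=2m+3$ variables satisfying $\ds(\hat g)=s(\ds(\hat f)+1)$, where $s$ is the number of $\ast$'s of $\hat f$. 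The genuinely new work in the paper is the upper bound $\e^d(\hat g)\le 2s$, obtained by partitioning the truepoints of $\hat g$ according to whether $\hat f(x)=\ast$ or $\hat f(x)=1$ and exhibiting, for any two points in the same class sharing the same $x$ (respectively $z$) component, an explicit common implicant. Without some such device for taming the implicate structure of a total function, your ``shared/unique coordinate'' sketch cannot be completed as stated.
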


\begin{proof} 
We construct a function $f$ such that $\frac{\ds(f)}{\e^d(f)} = \Theta(n)$.
Theorem \ref{linear} then follows immediately by duality.

Our construction relies heavily on a reduction of Gimpel from
the 1960's~\cite{gimpel},
which reduces a generic instance of the set covering problem to a DNF-minimization problem.
(See Czort~\cite{czort} or Allender et al.~\cite{circuits} for more recent discussions of this reduction.) 

Gimpel's reduction is as follows.
Let $A = \{e_1, \ldots, e_m\}$ be the ground set of the set covering instance, and let ${\cal S}$ be the
set of subsets $A$ from which the cover must be formed.
With each element $e_i$ in $A$, associate a Boolean input variable $x_i$.
For each $S \in {\cal S}$, let $x_S$ denote the assignment in $\{0,1\}^m$
where $x_i = 0$ iff $e_i \in S$.
Define the partial function $f(x_1,\ldots, x_m)$ as follows:

\[
f( {x}) = \left\{ 
\begin{array}{l l}
1& \quad \mbox {if ${x}$ contains exactly $m-1$ ones} \\
\ast & \quad \mbox {if ${x} \geq x_S$ for some $S \in {\cal S}$} \\
0 & \quad \mbox {otherwise} \\
\end{array} \right.
\]

There is a DNF formula of size at most $k$ that is consistent with this
partial function if and only if the elements $e_i$ of the set covering instance $A$ can be covered
using at most $k$ subsets in ${\cal S}$ (cf. ~\cite{czort}).

We apply this reduction to the simple, 2-uniform, set covering instance over
$m$ elements where
${\cal S}$ consists of all subsets containing exactly two of those $m$ elements.
The smallest set cover for this 
instance is clearly $\lceil m/2 \rceil$.
The largest independent set of 
elements is only of size 1, since every pair of elements is contained in a common subset of ${\cal S}$.
Note that this gives a ratio of minimal set 
cover to largest independent set of $\Theta(m).$

Applying Gimpel's reduction to
this simple set covering instance, we get the following partial function $\hat{f}$:

\[
\hat{f}( x) = \left\{ 
\begin{array}{l l}
1& \quad \mbox {if $x$ contains exactly $m-1$ ones} \\
\ast & \quad \mbox {if $x$ contains exactly $m-2$ ones} \\
\ast & \quad \mbox {if $x$ contains exactly $m$ ones}  \\
0 & \quad \mbox {otherwise} \\
\end{array} \right.
\]

Since the smallest set cover for the instance has size
$\lceil m/2 \rceil$,
$$\ds(\hat{f}) = \lceil m/2 \rceil.$$

Allender et al.\ extended the reduction of Gimpel by converting
the partial function $f$ to a total function $g$.
The conversion is as follows:

Let $t = m+1$ and let $s$ be the number of $\ast$'s in $f(x)$.
Let $y_1$ and $y_2$ be two additional Boolean variables, and let 
$z$ = $z_1 \ldots z_t$ be a vector of $t$ more Boolean variables. Let 
$S \subseteq \{0,1\}^t$ be a collection of $s$ vectors, each containing an 
odd number of 1's (since $s \leq 2^m$, such a collection exists).
Let $\chi$ be the function such that
$\chi(x) = 0$ if the parity 
of $x$ is even and $\chi(x) = 1$ otherwise.

The total function $g$ is defined as follows:

\[
g( x, y_1, y_2, z) = \left\{ 
\begin{array}{l l}
1& \quad \mbox {if $f(x) = 1$ and $y_1 = y_2 = 1$ and $z \in S$} \\
1 & \quad \mbox {if $f(x) = \ast$ and $y_1 = y_2 = 1$} \\
1 & \quad \mbox {if $f(x) = \ast, y_1 = \chi(x),$ and $y_2 = \neg \chi(x)$ }\\
0 & \quad \mbox {otherwise} \\
\end{array} \right.
\]

Allender et al. proved that this total function $g$ obeys the following property:

$$\ds(g) = s(\ds(f) +1).$$

Let $\hat{g}$ be the total function obtained by setting $f = \hat{f}$ in the above definition of $g$.

We can now compute $\ds(\hat{g})$.
Let $n$ be the number of input variables of $\hat{f}$.
The total function $\hat{g}$ is defined on $n = 2m + 3$ variables. 
Since $\ds(\hat{f}) = \lceil m/2 \rceil$, we have 

\[\ds(\hat{g}) = s\left(\lceil \frac{m}{2}\rceil +1\right) \geq s\left(\frac{n-3}{4} + 1\right)\]

\noindent where $s$ is the number of assignments $x$ for which $\hat{f}(x) = *$.

We will upper bound $\e^d(\hat{g})$ by dividing the truepoints 
of $\hat{g}$ into two disjoint sets and upper-bounding the size of a maximum independent set of truepoints in each. 
(Recall that two truepoints of $\hat{g}$ are independent if they do not satisfy a common implicant of $\hat{g}$.)

\begin{itemize}
\item[Set 1:] The set of all truepoints of $\hat{g}$ whose $x$ component has the property $f(x) = \ast$. 

Let $a_1$ be a maximum independent set of
truepoints of $\hat{g}$ consisting only of points in this set.
Consider two truepoints $p$ and $q$ in this set that have the same $x$ value. 
It follows that they share the same values for $y_1$ and $y_2$. 
Let $t$ be the term containing all variables $x_i$, and exactly
one of the two $y_j$ variables,
such that each $x_i$ appears without negation if it set to 1 by $p$ and $q$, 
and with negation otherwise,  and $y_j$ is 
set to 1 by both $p$ and $q$. 
Clearly, $t$ is an implicant of $\hat{g}$ by definiton of $\hat{g}$, and clearly $t$ covers both $p$ and $q$. It follows that $p$ and $q$ are not independent.

Because any two truepoints in this set with the same $x$ value 
are not independent, $|a_1|$
cannot exceed the number of different $x$ assignments. There are $s$ assignments such that $\hat{f}(x) = \ast$, so  $|a_1| \leq s$.

\item[Set 2:] The set of all truepoints 
of $\hat{g}$ whose $x$ component has the property $\hat{f}(x) = 1$. 

Let $a_2$ be a maximum independent set consisting only of points in this set.
Consider any two truepoints $p$ and $q$ in this set that contain the same assignment for $z$. We can construct a term $t$ of the form $wy_1y_2\widetilde{z}$ such that $w$ contains exactly $m-2$ $x_i$'s that are set to 1 by both $p$ and $q$, and all $z_i$s that are set to 1 by $p$ and $q$ appear in $\widetilde{z}$ without negation, and all other $z_i$s appear with negation. It is clear that $t$ is an implicant of $\hat{g}$ and that $t$ covers both $p$ and $q$. Once again, it follows that $p$ and $q$ are not independent truepoints of $g$.

Because any two truepoints in this set with the same $z$ value are not independent, $|a_2|$ cannot exceed the number of different $z$ assignments. There are $s$ assignments to $z$ such that $z \in S$, so $|a_2| \leq s$.
\end{itemize}

Since a maximum independent set of truepoints of $\hat{g}$
can be partitioned into an
independent set of points from the first set, and an independent set of points from the second set,
it immediately follows that
\footnote{It can actually be proved that in fact, $\e^d(\hat{g}) = 2s$, but details of this proof are omitted.}

\[\e^d(\hat{g}) \leq |a_1| + |a_2 | \leq s + s = 2s.\]

Hence, the ratio between the DNF size and $\e(g)$ size is:
\[\frac{s(\frac{n-3}{4} + 1)}{2s} \geq \frac{n+1}{8} = \Theta(n)\]
\end{proof}

Note that the above function gives a class of functions satisfying
the conditions of Proposition 1, for $\gamma = 1$.

\begin{corollary}
There exists a function $f$ such that $\frac{\cs(f)}{\e(f)} \geq \cs(f)^\epsilon$ for an $\epsilon \geq$ 0.
\end{corollary}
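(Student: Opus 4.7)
The plan is to observe that the family $\hat{g}$ constructed in the proof of Theorem~1 already has $\cs = \Theta(n^3)$, so the linear multiplicative gap between $\cs(f)$ and $\e(f)$ translates directly into a gap of $\Theta(\cs(f)^{1/3})$, yielding the corollary with $\epsilon = 1/3$.

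First I would count $s$, the number of $\ast$-entries of the partial function $\hat{f}$. These are exactly the assignments with $m-2$ ones (of which there are $\binom{m}{2}$) together with the single all-ones assignment, so $s = \binom{m}{2} + 1 = \Theta(m^2)$. Since $\hat{g}$ has $n = 2m+3$ input variables, this gives $s = \Theta(n^2)$.

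Second, I would invoke the Allender et al.\ identity $\ds(\hat{g}) = s(\ds(\hat{f}) + 1)$ already cited in the proof of Theorem~1, together with $\ds(\hat{f}) = \lceil m/2 \rceil$, to conclude
\[\ds(\hat{g}) \;=\; s\left(\lceil m/2 \rceil + 1\right) \;=\; \Theta(n^2)\cdot \Theta(n) \;=\; \Theta(n^3).\]
Passing to the complement function $f$ of $\hat{g}$, duality gives $\cs(f) = \ds(\hat{g}) = \Theta(n^3)$ and $\e(f) = \e^d(\hat{g}) \leq 2s = \Theta(n^2)$, so
\[\frac{\cs(f)}{\e(f)} \;=\; \Theta(n) \;=\; \Theta\!\left(\cs(f)^{1/3}\right),\]
which establishes the corollary with $\epsilon = 1/3$ (any smaller positive value works if one is concerned with absorbing hidden constants).

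The only subtlety worth flagging is that one needs an \emph{equality} on $\ds(\hat{g})$, not just the lower bound used in Theorem~1, in order to upper bound $\cs(f)$ in terms of $n$. The Allender et al.\ construction provides exactly this equality, so there is no real obstacle; after that the argument reduces to arithmetic on $m$ and $n$.
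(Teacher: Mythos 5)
Your proposal is correct and follows essentially the same route as the paper: compute $s = \binom{m}{2}+1 = \Theta(n^2)$, deduce $\ds(\hat{g}) = \Theta(n^3)$ from the Allender et al.\ identity, and combine with $\e^d(\hat{g}) \leq 2s$ and duality to get a gap of $\Theta(\cs(f)^{1/3})$. Your added remark about needing the exact equality $\ds(\hat{g}) = s(\ds(\hat{f})+1)$ (rather than only the lower bound displayed in Theorem~1) is a fair point of care that the paper leaves implicit.
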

\begin{proof} In the previous construction,
$\hat{f}(x)=\ast$ for exactly $\binom{m}{2}$ + 1 points, yielding $s$ = $\Theta(n^2)$. Hence, the DNF size is $\Theta(m^3)$, making the ratio between $\ds(\hat{g})$ and $\e^d(\hat{g})$ at least $\Theta(\ds(\hat{g})^\frac{1}{3})$. The CNF result follows by duality.
\end{proof}
\subsection{Constructing a function with an exponential gap}
\label{sec:another}

\begin{theorem}
There exists a function $f$ on $n$ variables such that $\frac{\cs(f)}{\e(f)} \geq 2^{\Theta(n)}$.
\end{theorem}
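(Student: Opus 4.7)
The plan is to form $G = \hat{g}_1 \wedge \cdots \wedge \hat{g}_k$, the conjunction of $k$ disjoint copies of the linear-gap function $\hat{g}$ from Theorem 1, on pairwise disjoint sets of $n_0 = 2m+3$ variables, giving a function on $n = k n_0$ variables. As in Theorem 1, I would bound $\ds(G)$ from below and $\ed(G)$ from above, and then invoke duality (setting $f = \overline{G}$) to recover the $\cs/\e$ statement.

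For the DNF lower bound, since the $\hat{g}_i$ depend on disjoint variables, every implicant of $G$ factors as $t^{(1)} \wedge \cdots \wedge t^{(k)}$ with each $t^{(i)}$ an implicant of $\hat{g}_i$. A minimum DNF for $G$ thus corresponds to a minimum covering of the truepoints of $G$ (which form the product of the truepoint sets of the $\hat{g}_i$) by product rectangles, and the standard multiplicativity of DNF size under disjoint AND then yields $\ds(G) = \ds(\hat{g})^k$.

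For the $\ed$ upper bound, I would extend the two-set partition argument from Theorem 1. Each truepoint of $G$ carries a type $\sigma \in \{1,2\}^k$ encoding, for each $i$, whether its $i$-th component lies in Set 1 or Set 2 of $\hat{g}_i$'s truepoints. Within a fixed type, the implicant constructions from Theorem 1 produce, for each coordinate $i$, an implicant of $\hat{g}_i$ that covers any two $\hat{g}_i$-truepoints of that set type agreeing on the relevant ``key'' sub-assignment (the $x$-values in Set 1, the $z$-values in Set 2). Conjoining these coordinate-wise implicants yields an implicant of $G$, so any two $G$-truepoints of the same type agreeing on all $k$ key sub-assignments are not independent. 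Since each key sub-assignment takes only $s$ values, an independent set inside one type has size at most $s^k$, and summing over the $2^k$ types gives $\ed(G) \leq (2s)^k$.

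Combining, $\ds(G)/\ed(G) \geq (\ds(\hat{g})/(2s))^k = ((\lceil m/2 \rceil + 1)/2)^k$. Fixing $m$ to be a small constant for which this inner ratio is at least $2$ (so $n_0$ is constant), the overall ratio is $\geq 2^k = 2^{n/n_0} = 2^{\Omega(n)}$; the trivial bound $\ds(G) \leq 2^{n-1}$ caps it at $2^{O(n)}$, so it equals $2^{\Theta(n)}$. Duality then yields $\cs(\overline{G})/\e(\overline{G}) = 2^{\Theta(n)}$. The main technical obstacle is the DNF multiplicativity step: the upper bound $\ds(G) \leq \ds(\hat{g})^k$ and the $\ed$ bound are both straightforward, but ruling out DNFs for $G$ that ``mix'' variables across the copies to beat the product bound requires the classical but nontrivial direct-sum (equivalently, product rectangle cover) argument for DNF size under disjoint AND.
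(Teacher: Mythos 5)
There is a genuine gap, and it sits exactly where you flagged it: the claim that $\ds(G) = \ds(\hat{g})^k$ for the conjunction of disjoint copies. The upper bound $\ds(G) \leq \ds(\hat{g})^k$ is indeed trivial, but the matching lower bound is not a ``classical direct-sum argument'' --- it is false in general. Voigt and Wegener showed in 1989 that minimal DNFs for the conjunction of functions on disjoint variables ``can be very simple'': there exist $f_1, f_2$ on disjoint variable sets with $\ds(f_1 \wedge f_2)$ substantially smaller than $\ds(f_1)\cdot\ds(f_2)$. The structural reason is the one you yourself identify: minimizing a DNF for $G$ is a set-cover problem whose covering sets are the product rectangles $\mathrm{cov}(t^{(1)}) \times \cdots \times \mathrm{cov}(t^{(k)})$, and integral covering numbers are not multiplicative under such products --- only the fractional cover number $\tau^*$ is. In fact $\ds(G) \geq \tau^*(\hat{g})^k$ is the best product-type lower bound available for free, and greedy rounding gives $\ds(G) \leq \bigl(1 + k\ln|T_{\hat{g}}|\bigr)\,\tau^*(\hat{g})^k$, so $\ds(\hat{g}^{\wedge k})^{1/k}$ converges to $\tau^*(\hat{g})$, not to $\ds(\hat{g})$. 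Your construction therefore only yields an exponential gap if you additionally prove $\tau^*(\hat{g}) > 2s$ for the specific function $\hat{g}$ --- a fractional lower bound that Theorem 1 does not supply (it only gives the integral bound via the set-cover correspondence) and that you do not argue. Your $\ed(G) \leq (2s)^k$ bound, by contrast, is sound: the type-partition plus pigeonhole on the $s^k$ key combinations works.

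For contrast, the paper avoids products entirely: it keeps a single $2$-uniform set-cover instance on $m$ elements (cover number $\lceil m/2\rceil$) but re-encodes it on only $O(\log m)$ variables, using the Allender et al.\ generalization of Gimpel's reduction (Lemma~\ref{lemma:generalizeG}) together with the probabilistic construction of the vector sets $V$ and $W$ (Lemma~\ref{random}). The DNF size of the resulting partial function is still $\lceil m/2 \rceil$, which is now $2^{\Theta(n)}$ in the reduced number of variables, and the totalization and the $\ed \leq 2s$ argument carry over unchanged. If you want to salvage a product-based approach, you would need a per-copy certificate that is itself multiplicative --- e.g., a fractional cover or dual packing bound for $\hat{g}$ exceeding $2s$ --- rather than the integral quantity $\ds(\hat{g})$.
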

\begin{proof}

As before, we will reduce a set covering instance to a DNF-minimization
problem involving a partial Boolean function $f$.
However, here we will rely on a more general version of Gimpel's reduction,
due to Allender et al., described in the following lemma.

\begin{lemma}
\cite{circuits}
\label{lemma:generalizeG}
Let ${\cal S}=\{S_1, \ldots, S_p\}$ be a set of subsets of ground
set $A = \{e_1, \dots ,e_m\}$. Let $t > 0$ and let $V = \{v^i: i \in \{1, \dots ,m\}\}$ and $W = \{w^j: j \in \{1, \dots ,p\}\}$ be sets of vectors from $\{0,1\}^t$ such that for
all $j \in \{1, \dots ,p\}$ and $i \in \{1, \dots ,m\}$, 
 $$e_i \in S_j \mbox{ iff } v^i \geq w^j$$

Let $f:\{0,1\}^t \rightarrow \{0,1,\ast\}$ be the partial function 
such that
\[
\hspace{5 mm}f( {x}) = \left\{ 
\begin{array}{l l}
1& \quad \mbox {if ${x} \in V$} \\
\ast & \quad \mbox {if ${x} \geq w$ for some $w \in W$ and $x \notin V$} \\
0 & \quad \mbox {otherwise} \\
\end{array} \right.
\]
  
Then ${\cal S}$ has a minimum cover of size $k$ iff $\ds(f) = k$.

\end{lemma}

(Note that the construction in the above lemma is equivalent to Gimpel's if
we take $t=m$, 
$V = \{v \in \{0,1\}^m | v$ contains exactly $m-1$ 1's $\}$,
and $W  = \{x_S | S \in {\cal S}\}$, where $x_S$ denotes
the assignment in $\{0,1\}^m$ where $x_i = 0$ iff $e_i \in S$.)

As before, we use the simple 2-uniform set covering instance
over $m$ elements where ${\cal S}$
consists of all subsets of two of those elements. 
The next step is to construct sets $V$ and $W$
satisfying the properties in the above lemma for this set
covering instance.  To do this, we use a randomized construction
of Allender et al.\ that generates sets $V$ and $W$
from an $r$-uniform set-covering instance, for any $r > 0$.
This randomized construction appears in the appendix of
~\cite{circuits}, and is described in the following lemma.

\begin{lemma}
\label{random}
Let $r > 0$ and let
${\cal S}=\{S_1, \ldots, S_p\}$ be a set of subsets of $\{e_1, \dots ,e_m\}$,
where each $S_i$ contains exactly $r$ elements.
Let $t \geq 3r(1+\ln(pm))$. 
Let $V = \{v^1, \ldots, v^m\}$ be a set of $m$ vectors of length $t$, where
each $v^i \in V$
is produced by randomly and independently setting each bit
of $v^i$ to 0 with probability $1/r$.
Let $W = \{w^1, \ldots, w^p\}$, where each
$w^j$ = the bitwise AND of all $v^i$ such that $e_i \in S_j$.
Then, the following holds with probability greater than 1/2:
For all $j \in \{1, \dots ,p\}$ and $i \in \{1, \dots ,m\}$, $e_i \in S_j$ iff $v^i \geq w^j$.
\end{lemma} 

By Lemma~\ref{random}, there exist
sets $V$ and $W$, each consisting of vectors of length $6(1+\ln(m^2(m-2)/2)) = O(\log m)$,
satisfying the conditions of Lemma~\ref{lemma:generalizeG} for our simple 2-uniform set covering instance.
Let $\tilde{f}$ be the partial function on $O(\log m)$ variables obtained
by using these $V$ and $W$ in the definition of $f$ in Lemma~\ref{lemma:generalizeG},

The DNF-size of $\tilde{f}$ is the size of the smallest set cover, 
which is $\lceil m/2\rceil$, and the number of variables $n$ = $\Theta$(log $m$); 
hence the DNF size is $2^{\Theta (n)}$.

We can convert the partial function $\tilde{f}(x)$ to a total function $\tilde{g}(x)$ just as done in the previous section.
The arguments regarding DNF-size and 
$\e^d(\tilde{g})$ remain the same. Hence, the DNF-size is now 
$s\left(2^{\Theta(n)} + 1\right),$ and
$\e^d(\tilde{g})$ is again at most $2s$.

The ratio between the DNF-size and $\e^d(\tilde{g})$ is therefore at least $ 2^{\Theta(n)}$. Once again, the CNF result follows.
\end{proof}

\subsection{The quantity $\e_k(f)$}
\label{sec:ess_k(f)}

We say that a set $S$ of falsepoints (truepoints) of $f$ is a 
``$k$-independent set'' if no $k$ of the falsepoints (truepoints) of $f$
can be covered by the same implicate (implicate) of $f$.

We define $\e_k(f)$ to be the size of the largest $k$-independent
set of falsepoints of $f$, and $\e_k^d(f)$ to be the size
of the largest $k$-independent set of truepoints of $f$.

If $S$ is a $k$-independent set of falsepoints of $f$, 
then each implicate of $f$ 
can cover at most $k-1$ falsepoints in $S$.
We thus have the following lower-bound on CNF-size:
$\cs(f) \geq \frac{\e_k(f)}{k-1}$.   

Like $ess(f)$, this lower bound is not tight.

\begin{theorem}
For any arbitrary $2 \leq k \leq h(n)$, where $h(n) = \Theta(n)$, there exists a function $f$ on $n$ variables, such that the gap between $\cs(f)$ and $\frac{\e_k(f)}{k-1} $ is at least $2^{\Theta(\frac{n}{k})}$.
\end{theorem}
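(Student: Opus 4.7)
The natural approach is to generalize the construction of Theorem~2 (Section~\ref{sec:another}) by replacing the $2$-uniform set covering instance with a $k$-uniform one. As base instance I take the $k$-uniform set cover over $\{e_1,\ldots,e_m\}$ with $\mathcal{S}$ equal to all $\binom{m}{k}$ size-$k$ subsets: its minimum cover has size $\lceil m/k\rceil$, and since every $k$ elements form a block of $\mathcal{S}$, the largest set of elements no $k$ of which lie in a common block has size exactly $k-1$. I then apply Lemma~\ref{random} with $r=k$ to obtain sets $V,W\subseteq\{0,1\}^t$ of length $t = O(k\log(pm))$, use Lemma~\ref{lemma:generalizeG} to construct a partial function $\tilde{f}_k$ with $\ds(\tilde{f}_k)=\lceil m/k\rceil$, and extend it to a total function $\tilde{g}_k$ on $n=\Theta(t)$ variables via the Allender et al.\ construction from the proof of Theorem~\ref{linear}. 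Writing $s$ for the number of $\ast$-points, the Allender et al.\ identity gives $\ds(\tilde{g}_k)=s(\ds(\tilde{f}_k)+1)\ge s\lceil m/k\rceil$.

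The crux is to show $\e^d_k(\tilde{g}_k) = O(ks)$, which I do by mirroring the Set~1/Set~2 partition in the proof of Theorem~\ref{linear}. For Set~1 (truepoints with $\tilde{f}_k(x)=\ast$), the implicant containing all $x_i$ literals together with the appropriate $y_j$ still covers every truepoint sharing that $x$, so any $k$-independent subset of Set~1 contains at most $k-1$ truepoints per value of $x$, for a total of at most $(k-1)s$. For Set~2 (truepoints with $\tilde{f}_k(x)=1$), I argue that every implicant covering a Set~2 truepoint must pin $z$ to a single element of $S$: such an implicant is forced to constrain $z$ to a subcube of $S$ (else it would cover the falsepoint $(v^i,1,1,z')$ for some $z'\notin S$), and since every vector in $S$ has odd parity while any subcube of size $\ge 2$ contains a pair differing in one coordinate (hence of opposite parities), every such subcube has size exactly one. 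Therefore $k$-independence of Set~2 truepoints can be analyzed one $z$ at a time, and for each fixed $z$ the $k$-independent subset corresponds to a $k$-independent set of elements of the set cover, of size at most $k-1$; summing, Set~2 contributes at most $(k-1)s$. Hence $\e^d_k(\tilde{g}_k)\le 2(k-1)s$.

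Combining these bounds yields
\[
\frac{\ds(\tilde{g}_k)}{\e^d_k(\tilde{g}_k)/(k-1)}\ \ge\ \frac{s\lceil m/k\rceil}{2s}\ =\ \Omega(m/k),
\]
which, on translating $m$ back through $t = O(k\log(pm))$, gives the claimed $2^{\Omega(n/k)}$ gap; the CNF version then follows by duality. The hardest step is the Set~2 bound, particularly the claim that every implicant of $\tilde{g}_k$ covering a truepoint with $\tilde{f}_k(x)=1$ pins $z$ exactly; the parity observation above gives this cleanly, but it must also be verified that no implicant of $\tilde{g}_k$ covers Set~1 truepoints across distinct $\ast$-values of $x$ in a way that evades the per-$x$ count, so that jointly the two arguments yield the desired $O(ks)$ bound on $\e^d_k(\tilde{g}_k)$.
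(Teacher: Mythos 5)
Your proposal follows essentially the same route as the paper's proof: the same $k$-uniform set-cover instance consisting of all $\binom{m}{k}$ size-$k$ subsets, the same passage through Lemma~\ref{lemma:generalizeG} and Lemma~\ref{random} followed by the Allender et al.\ total-function construction, and the same two-set partition of the truepoints yielding $\e_k^d(\tilde{g}) \leq 2s(k-1)$ and hence the claimed gap. Your parity argument for why an implicant covering a truepoint with $\tilde{f}(x)=1$ must pin $z$ to a single vector of $S$ makes explicit a step the paper only sketches, but the argument is otherwise the same.
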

\begin{proof}

Consider the $k$-uniform set cover instance consisting of all subsets of $\{e_1, \dots, e_m\}$ of size $k$. Construct $V$ and $W$ randomly using the construction from the appendix of \cite{circuits} described in Lemma~\ref{random}, and define a corresponding partial function $\tilde{f}$, as in Lemma 1. Note that according to the definition of $\tilde{f}$, there can be no $k$ $v^i$ for any $k$ values of $i \in \{1, \dots ,m\}$, such that all $v^i$ $\geq w^j$ for some $j \in \{1, \dots ,p\}.$ The maximum size $k$-independent set of truepoints
of $\tilde{f}$ consists of $k-1$ truepoints.

We can convert the partial function $\tilde{f}$ to a total function $\tilde{g}$ according to the construction detailed in Section \ref{linear}. Once again, we introduce $s$ new truepoints such that $\tilde{f}(x) = \ast$, yielding a maximum of $s$ pairwise independent truepoints. The definiton of $k$-independence, however, allows $k-1$ ``copies'' of these truepoints that differ in the assignments  to $z$ for each of the $s$ points. Hence, the largest $k$-independent set 
of these points can contain a maximum of $s(k-1)$ points. 

We have previously mentioned that there exist $k-1$ $k$-independent ground elements (i.e., $\tilde{f}(x) = 1$ truepoints). Once again, when we consider the $s$ $ \tilde{z}$ portion of the term, where no two $\tilde{z}$ portions can be covered by the prime implicate, we can include a total of $s(k-1)$ of these truepoints.  Hence, the largest independent set for points of this type is of size is of size no greater than $s(k-1)$. Since these two types of truepoints are independent, $\e_k^d(\tilde{g}) \leq 2s(k-1)$.

The lower bound on DNF size, $\frac{\e_k^d(f)}{k-1}$, is, for this $\tilde{g}$, $ \leq \frac{2s(k-1)}{k-1} \leq 2s$. The ratio between that and the actual DNF size is 
\[\frac{s(2^{\Theta(\frac{n}{k})}+1)}{2s}  \geq 2^{\Theta(\frac{n}{k})}.\]

The CNF result clearly follows. \end{proof}

\section{Size of the gap for Horn Functions}
\label{sec:Horn}

Because Horn-CNFs contain at most one unnegated variable per clause, they can be expressed as implications; eg. $\bar{a} \vee b$ is equivalent to $a \rightarrow b$. Moreover, a conjunction of several clauses that have the same antecedent can be represented as a single \emph{meta-clause}, where the antecedent is the antecedent common to all the clauses and the consequent is comprised of a conjunction of all the consequents, eg. ($a \rightarrow b) \wedge (a \rightarrow c)$ can be represented as $a \rightarrow (b \wedge c)$.

\subsection{Bounds on the ratio between $\cs(f)$ and $\e(f)$}

Angluin, Frazier and Pitt \cite{angluin} presented an algorithm (henceforth: the AFP algorithm) to learn Horn-CNFs, where the output is a series of meta-clauses. It can be proven \cite{ariaslattice, ariasgdNew} that the output of the algorithm is of minimum implication size (henceforth: $\mi(f)$) -- that is, it contains the fewest number of meta-clauses needed to represent function $f$. Each meta-clause can be a conjunction of at most $n$ clauses; hence, each implication is equivalent to the conjunction of at most $n$ clauses. Therefore,
\[\cs(f) \leq n \times \mi(f).\]
The learning algorithm maintains a list of negative and positive examples (falsepoints and truepoints of the Horn function, respectively), containing at most $\mi(f)$ examples of each.

\begin{lemma}
\label{lemmaHorn}
The set of negative examples maintained by the AFP algorithm is an independent set.
\end{lemma}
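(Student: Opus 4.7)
My plan is to prove this lemma by induction on the state of the AFP algorithm, leveraging Property 1 to translate independence of two falsepoints into the existence of a separating truepoint. The base case---an empty list of negative examples---is vacuous. For the inductive step, I would consider the processing of a new negative counterexample $x$ (a falsepoint of $f$ that the current hypothesis $H$ classifies as positive); the algorithm either refines some $N_i$ in the list by replacing it with $y = N_i \wedge x$ (when $y < N_i$ strictly and a membership query confirms that $y$ is a falsepoint), or appends $x$ to the end of the list.

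For the append case, I would show that $x$ is independent from each existing $N_i$. Since $x$ was appended, the refinement check with every $N_i$ failed, which leaves two possibilities: either (a) $N_i \wedge x < N_i$ strictly but $N_i \wedge x$ is a truepoint (as revealed by a membership query), in which case $N_i \wedge x$ itself is a separator, since its coordinates lie in $\{N_{i,k}, x_k\}$ by construction; or (b) $N_i \wedge x = N_i$, i.e., $x \geq N_i$ componentwise. For case (b), I would use the structure of the AFP hypothesis clause associated with $N_i$: this clause forces any $x \geq N_i$ satisfying $H$ to also satisfy $x \geq \overline{N_i}$, where $\overline{N_i}$ is the bitwise AND of all positive examples currently known to dominate $N_i$. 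Since truepoints of a Horn function are closed under bitwise AND, $\overline{N_i}$ is a truepoint sandwiched between $N_i$ and $x$ and hence a separating truepoint. The edge case where no positive example dominates $N_i$ is ruled out: the associated clause would then be purely negative and would force $H(x) = 0$ for any $x \geq N_i$, contradicting the assumption that $x$ is a counterexample.

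For the refine case, where $N_i$ is replaced by $y = N_i \wedge x$, I would show that $y$ remains independent from every other $N_j$ in the list. By the inductive hypothesis, there is a truepoint $a$ separating $N_i$ and $N_j$; I would construct a truepoint $b$ separating $y$ and $N_j$ by modifying $a$ only on the coordinates where $y$ differs from $N_i$ (i.e., where $N_{i,k} = 1$ and $x_k = 0$). A naive candidate such as $a \wedge x$ takes the right values coordinatewise but is not guaranteed to be a truepoint since $x$ is a falsepoint; the fix is to intersect $a$ with an appropriate truepoint---such as a Horn closure or a positive example already known to the algorithm---and again appeal to the closure of Horn truepoints under intersection. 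The main obstacle will be this refine case: selecting the right truepoint to intersect with requires careful bookkeeping of the positive examples discovered by the algorithm, whereas the append case reduces almost directly to a well-documented structural property of the AFP hypothesis clauses.
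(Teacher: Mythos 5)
Your overall strategy (induction on the run of the AFP algorithm, using Property 1 to reduce independence to exhibiting a separating truepoint) differs from the paper's, which is a static argument about the final list: the paper simply cites two structural lemmas of Arias et al.\ --- that for comparable negative examples $n_i \leq n_j$ there is a positive example $n_i^{\ast}$ with $n_i < n_i^{\ast} \leq n_j$, and that for incomparable negative examples the meet $n_i \wedge n_j$ is a truepoint --- and observes that in either case the intermediate truepoint separates the pair. Your append case is essentially a self-contained re-derivation of the comparable-pair argument: the point $\overline{N_i}$ you build from the surviving consequents of the hypothesis clauses for $N_i$ is exactly the $n_i^{\ast}$ of Arias et al., and that part of your argument is sound (including the observation that the purely negative clause for $N_i$ must already have been killed by a positive counterexample, so the intersection defining $\overline{N_i}$ is nonempty).

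The genuine gap is the refine case, and you have correctly identified it but not closed it. Your inductive hypothesis --- ``all pairs currently in the list are independent'' --- is too weak to survive a refinement. Knowing only that \emph{some} truepoint $a$ separates $N_i$ and $N_j$ gives you no handle on $a$'s values at the coordinates where $y = N_i \wedge x$ drops below $N_i$; to repair $a$ into a separator of $y$ and $N_j$ you would need a truepoint $c$ that is $0$ precisely where $a$ must be flipped and $1$ wherever $y_k = N_{j,k} = 1$, and nothing in your setup produces such a $c$. The fix is not a local perturbation of the old separator but a strengthening of the invariant: one must maintain throughout the run that (i) incomparable negative examples in the list have a truepoint as their componentwise meet, and (ii) each $N_i$ has an associated truepoint $N_i^{\ast} > N_i$ lying below every negative example in the list that dominates $N_i$. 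These are Lemmas 6 and 7 of Arias et al., and proving that \emph{they} are preserved under both append and refine is where the real work lies; independence then falls out as a corollary, which is precisely the shortcut the paper takes by citation. As written, your proof establishes the lemma only for runs in which no refinement ever occurs.
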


\begin{proof}
The proof for this lemma relies heavily on \cite{ariaslattice}; see there for further details.

Let us consider any two negative examples, $n_i$ and $n_j$, maintained by the algorithm. There are two possibilities:
\begin{enumerate}
	\item $n_i \leq n_j$ or $n_j \leq n_i.$ (These two examples are comparable points; one is below the other on the Boolean lattice.)
	\item $n_i$ and $n_j$ are incomparable points (Neither is below the other on the lattice).
\end{enumerate}
Let us consider the first type of points: Without loss of generality, assume that $n_i \leq n_j$. Arias et al. define a positive example $n_i^\ast$ for each negative example $n_i$. This example $n_i^\ast$ has several unique properties; amongst them, that $n_i < n_i^\ast$ for all negative examples $n_i$ (Section 3 in \cite{ariaslattice}). They further prove (Lemma 6 in  \cite{ariaslattice}) that if $n_i \leq n_j$, then $n_i^{\ast} \leq n_j$ as well. Hence, any attempt to falsify both falsepoints, $n_i$ and $n_j$, with a common implicate of the Horn function would falsify the positive example ($n_i^{\ast}$) that lies between them as well. Therefore, these two points are independent.

Now let us assume that $n_i$ and $n_j$ are incomparable. Any implicate that falsifies both points is composed of variables on which the two points agree. Clearly, this implicate would likewise cover a point that is the componentwise intersection  of $n_i$ and $n_j$. However, Arias et al. prove (Lemma 7 in \cite{ariaslattice}) that $n_i \wedge n_j$ is a positive point if $n_i$ and $n_j$ are incomparable. Hence, any implicate that falsifies both $n_i$ and $n_j$ would likewise falsify the truepoint $n_i \wedge n_j$ that lies between them. Therefore, these two points cannot be falsified by the same implicate and they are independent.
\end{proof}
\begin{theorem} 
For any Horn function $f$, $\frac{\cs(f)}{\e(f)} \leq n$
\end{theorem}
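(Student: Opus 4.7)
The plan is to sandwich $\mi(f)$ between $\cs(f)$ and $n \cdot \e(f)$, using the two pieces of infrastructure already set up in the preamble to the theorem: the bound $\cs(f) \leq n \cdot \mi(f)$ and the Lemma stating that the set of negative examples kept by the AFP algorithm is an independent set of falsepoints.

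First I would run the AFP algorithm on $f$ and consider the state of its bookkeeping at termination. By the cited result of Arias et al., the output consists of exactly $\mi(f)$ meta-clauses, and in the AFP algorithm there is a one-to-one correspondence between meta-clauses in the hypothesis and the negative examples that spawned them. So at termination the algorithm is maintaining exactly $\mi(f)$ negative examples, each of which is a falsepoint of $f$.

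Next I would invoke Lemma \ref{lemmaHorn}: these $\mi(f)$ falsepoints are pairwise independent. By the definition of $\e(f)$ as the size of the largest independent set of falsepoints, this immediately gives $\e(f) \geq \mi(f)$. Combining with the inequality $\cs(f) \leq n \cdot \mi(f)$ already noted in the excerpt, we obtain
\[
\cs(f) \;\leq\; n \cdot \mi(f) \;\leq\; n \cdot \e(f),
\]
which is exactly the claim $\cs(f)/\e(f) \leq n$.

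The only real obstacle is the implicit bookkeeping claim in the middle step, namely that the AFP algorithm's list of negative examples at termination has size exactly $\mi(f)$ rather than just ``at most $\mi(f)$'' as stated in the excerpt. If one only wants the upper bound $n$ on the ratio, however, even the weaker statement suffices in one direction for some of the analysis; what matters here is that each meta-clause in the AFP output is in correspondence with a retained negative example, which is standard for this algorithm. Once that correspondence is stated cleanly, the theorem reduces to two one-line inequalities, so no further calculation is needed.
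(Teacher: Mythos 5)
Your proof follows exactly the same route as the paper's: the AFP algorithm's retained negative examples form an independent set of falsepoints by Lemma~\ref{lemmaHorn}, giving $\e(f) \geq \mi(f)$, which combines with $\cs(f) \leq n \cdot \mi(f)$ to yield the bound. Your remark about the bookkeeping is apt --- the paper states only that the algorithm maintains ``at most'' $\mi(f)$ negative examples yet concludes $\e(f) \geq \mi(f)$, so your explicit appeal to the one-to-one correspondence between output meta-clauses and retained negative examples is a tightening of the same argument rather than a different one.
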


\begin{proof}

For any Horn function $f$, there exists a set of negative examples of size at most $min\_imp(f)$, and these examples are all independent. Hence, $\e(f) \geq \mi(f)$. We have already stated that $\mi(f)$ is at most a factor of $n$ times larger than the minimum CNF size for this function.

Hence, $\cs(f) \leq n \times \e(f)$.

Moreover, since Lemma \ref{lemmaHorn} holds for general Horn functions in addition to pure Horn \cite{ariasgdNew}, this bound  holds for all Horn functions. 
\end{proof}

\subsection{Constructing a Horn function with a large gap between $\e(f)$ and $\cs(f)$ }
\begin{theorem}
There exists a definite Horn function $f$ on $n$ variables such that $\frac{\cs(f)}{\e(f)} \geq \Theta(\sqrt{n})$.
\end{theorem}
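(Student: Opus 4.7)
The plan is to adapt the Gimpel-style set cover reduction from Sections~\ref{linear} and~\ref{sec:another}, but with a Horn-preserving variant. I would again start from the 2-uniform set cover instance on $\{e_1,\ldots,e_m\}$ whose subsets are all $\binom{m}{2}$ pairs; this instance has minimum cover size $\lceil m/2 \rceil$, while its maximum pairwise-independent set of elements has size $1$. The target is a definite Horn function $\hat{h}$ on $n = \Theta(m^2)$ variables for which $\cs(\hat{h}) = \Theta(m)$ and $\e(\hat{h}) = O(1)$, yielding $\cs(\hat{h})/\e(\hat{h}) = \Theta(m) = \Theta(\sqrt{n})$.

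To construct $\hat{h}$, I would introduce one variable per element, one ``selector'' variable per pair-subset, and a small number of additional gadget variables whose role parallels $y_1, y_2, z$ in the proof of Theorem~\ref{linear}. Since the parity function $\chi$ used there is not Horn, the partial-to-total completion must be redone: I plan to convert every $\ast$-labeled assignment of the underlying partial function into a truepoint by forcing it through the pair-selector variables in a way that the resulting set of truepoints is closed under componentwise AND (hence Horn) and such that every new implication has a single positive head (hence definite). This Horn-preserving completion is where the argument must diverge from the non-Horn constructions, and it is the main technical obstacle.

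Once $\hat{h}$ is in hand, I would argue $\cs(\hat{h}) = \Theta(m)$ by exhibiting a size-$O(m)$ definite Horn CNF that realizes $\hat{h}$ directly from a fixed minimum pair-cover, and by lower-bounding $\cs(\hat{h}) \geq \Omega(m)$ via the reduction, which forces every CNF for $\hat{h}$ to project down to a cover of the $m$ ground elements. I would then bound $\e(\hat{h}) \leq O(1)$ by exploiting the 2-uniform structure: for any two falsepoints of $\hat{h}$ whose underlying element pair is jointly covered by a pair-subset, the prime implicate contributed by that pair-subset is falsified by both, so they are not independent. Since every two distinct elements lie in a common pair-subset, only a constant number of independence classes of falsepoints survive, giving $\e(\hat{h}) = O(1)$ and the desired gap $\cs(\hat{h})/\e(\hat{h}) = \Theta(\sqrt{n})$.
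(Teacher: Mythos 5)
There is a genuine gap here: the proposal names its ``main technical obstacle'' --- the Horn-preserving completion of the partial function --- and then never overcomes it. No actual definite Horn function is constructed; you only assert that the $\ast$-points can be turned into truepoints ``in a way that the resulting set of truepoints is closed under componentwise AND.'' That is precisely the hard part, and the Gimpel/Allender machinery you are trying to adapt does not survive it: that reduction encodes set cover into \emph{DNF} size (covering truepoints by implicants), whereas the quantities $\cs(f)$ and $\e(f)$ for a Horn function live on the falsepoint side, and the dual of a Horn function is not Horn. The paper abandons the Gimpel route entirely for this theorem and instead builds a definite Horn CNF directly (witness clauses $s_j \rightarrow x_i$, long feedback clauses $x_1 \cdots x_k \rightarrow s_j$ that force every equivalent CNF to encode a set cover, and $t\binom{k}{2}$ amplification clauses $z_h \rightarrow s_j$), citing known results for the lower bound $\cs(f) \geq 3\binom{k}{2} + t\cdot(k/2)$. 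Your claim that ``the reduction forces every CNF for $\hat h$ to project down to a cover'' has no supporting mechanism in your construction; the feedback clauses are what create that forcing, and you have nothing playing their role.

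The quantitative plan is also off in a way that matters. You aim for $\cs(\hat h)=\Theta(m)$ and $\e(\hat h)=O(1)$; the paper instead gets $\cs(f)=\Theta(k^3)$ and $\e(f)\leq 3\binom{k}{2}+t=\Theta(k^2)$, and the $\sqrt{n}$ gap comes from the amplification variables inflating $\cs$ while contributing only $+t$ to the bound on $\e$. Your argument for $\e(\hat h)=O(1)$ does not go through: a falsepoint is an assignment to all $\Theta(m^2)$ variables and has no single ``underlying element pair,'' and whether the implicate contributed by a pair-subset is falsified by two given falsepoints depends on the values of the selector and gadget variables at \emph{both} points, not just on which elements are zeroed. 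In the paper's analysis the analogous step only shows that falsepoints sharing the same amplification index $h$ (and satisfying further structural conditions) are pairwise non-independent, which is why that portion contributes $t$ --- not $O(1)$ --- independent points, and why the witness- and feedback-type falsepoints contribute up to $3\binom{k}{2}$ more. Without an amplification mechanism and without a correct accounting of those $\Theta(k^2)$ potentially independent falsepoints, the claimed $O(1)$ bound, and hence the $\Theta(\sqrt{n})$ gap, is unsupported.
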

\begin{proof}
Consider the 2-uniform set covering instance over $k$ elements consisting of all subsets of two elements. We can construct a definite Horn formula $\varphi$ corresponding to this set covering according to the construction in \cite{bookDraft}, with modifications based on \cite{Bhattacharya_onapproximate}.

The formula $\varphi$ will contain 3 types of variables: 
\begin{itemize}
\item Element variables: There is a variable $x$ for each of the $k$ elements.
\item Set variables: There is a variable $s$ for each of the $\binom{k}{2}$ subsets.
\item Amplification variables: There are $t$ variables $z_1 \dots z_t$. 
\end{itemize}
The clauses in $\varphi$ fall into the following 3 groups:
\begin{itemize}
\item Witness clauses: There is a clause $s_j \rightarrow x_i$ for each subset and for each element that the subset covers. There are 2$\binom{k}{2}$ such clauses. 
\item Feedback clauses : There is a clause $x_1 \dots x_k \rightarrow s_j$ for each subset. There are $\binom{k}{2}$ such clauses.
\item Amplification clauses: There is a clause $z_h \rightarrow s_j$ for every $h \in \{1 \dots t\}$ and for every subset. There are $t{\binom{k}{2}}$ such clauses. 
\end{itemize}

It follows from \cite{bookDraft} that any minimum CNF for this function must contain all witness and feedback clauses, along with $tc$ amplification clauses, where $c$ is the size of the smallest set cover.

This particular function $f$ has a minimum set cover of size $k/2$; hence, $\cs(f) = 2 {\binom{k}{2}} + {\binom{k}{2}} + t(k/2).$

\vspace{5 mm}

We will upper bound $\e(f)$ by dividing the falsepoints of $f$ into three disjoint sets and finding the maximum independent set for each. 
\begin{itemize}
	\item[Set 1:] The set of all falsepoints of $f$ that contain at least one $x_i$ = 0 for $i \in \{1, \dots ,k\}$ and some $s_j$ = 1 for a subset $s_j$ that covers $x_i$. 

Let $a_1$ be the largest independent set of $f$ consisting of points in this set.
These points can be covered by an implicates of the form $s_j \rightarrow x_i$, of which there are $2{\binom{k}{2}}$. We will define the function $f'$ whose falsepoints are just the Type 1 points. Since these points are covered by the $s_j \rightarrow x_i$ implicates, $\cs(f')$ is no more than the number of $s_j \rightarrow x_i$ implicates. We have earlier said that $\e(f') \leq \cs(f')$, hence it follows that $\e(f') \leq 2{\binom{k}{2}}$. $\e(f')$ is precisely the size of $a_1$; hence, $a_1$ can contain no more than $2{\binom{n}{2}}$ points. 
		
	\item[Set 2:] The set of all falsepoints that are not in the first set, have $x_i = 1$ for all $i \in \{1, \dots ,k\}$, and at least one $s_j$ = 0 for some $j \in \{1, \dots ,{\binom{k}{2}}\}$. 

Let $a_2$ of $f$ be the largest independent set consisting of points in this set.
These points can be covered by implicates of the form $x_1 \dots x_k \rightarrow s_j$. There are $\binom{k}{2}$ such implicates. Hence, by the same argument as above, $a_2$ can contain no more than $\binom{k}{2}$ points.
	
	\item[Set 3:] The set of all falsepoints that are not in the first two sets, and therefore have $z_h=1$ for some $h \in \{1, \dots ,t\}$, $x_i$ = 0 for some $i \in \{1, \dots ,k\}$, and $y_j = 0$ for all subsets $y_j$ covering $x_i$. 

Let $a_3$ be the largest independent set of $f$ consisting of points in this set. Let us fix $h = 1$.
Consider a falsepoint $p$ in this set where $x_i = 0$ for at least one $i \in \{1, \dots ,k\}.$ If $p$ contained a $y_j = 1$ such that the subset $y_j$ covers $x_i$, that point would be a point in the first set. Hence, the only points of this form in this set have $y_j = 0$ for \emph{all} $k-1$ subsets $y_j$ that cover $x_i$. 

Now consider another falsepoint $q$ in this set, where $x_a = 0$ for at least one $a \in \{1, \dots ,k\}$. Once again, the only points in this set must set $y_b = 0$ for all $k - 1$ subsets $y_b$ that cover $x_a$.

Because the set covering problem included a set for each pair of $x_i$ points, there exists some $y_j$ that covers both $x_i$ and $x_a$. By the previous argument, that $y_j$ is set to 0 in all assignments that set $x_i$ or $x_a$ = 0. For a fixed $k$, all of these points can be covered by the implicate $z_k \rightarrow y_j$. Hence, points $p$ and $q$ are not independent.

In fact, any two falsepoints chosen that are not in the first set and contain $z_h = 1$ for the same $h$ and at least one $x_i = 0$ are not independent. Because there are $t$ values of $h$, $a_3$ therefore has size $t$. 
\end {itemize}

The largest independent set for all falsepoints cannot exceed the sum of the independent sets for these three disjoint sets, hence 
\[\e(f) \leq |a_1| + |a_2| + |a_3 | \leq 2{\binom{k}{2}} + {\binom{k}{2}} + t.\]

The gap  between $\cs(f)$ and $\e(f)$ = 
\[\frac{\cs(f)}{\e(f)} \geq \frac{3{\binom{k}{2}} + t(k/2)}{3{\binom{k}{2} + t}}.\]
Let us set $t = 3 {\binom{k}{2}}$. The difference is now:
\[\frac{\cs(f)}{\e(f)} \geq \frac{t(1 + k/2)}{2t }\geq \Theta(k).\]
We have $k$ element variables, $\binom{k}{2}$ set variables, and $3 {\binom{k}{2}}$ amplification variables, yielding $n = \Theta(k^2)$ variables in total. The difference between $\cs(f)$ and $\e(f)$ is therefore $\geq \Theta(\sqrt {n})$.
\end{proof}

We earlier posited that if $\Sigma^2_p \neq$ \emph{co-NP}, there exists an infinite set of functions for which $\frac{\cs(f)}{\e(f)} \geq \cs(f)^\gamma$ for some $\gamma > 0.$ We can now prove a stronger theorem:
\begin{theorem}
There exists an infinite set of Horn functions $f$ for which $\frac{\cs(f)}{\e(f)} \geq \cs(f)^\gamma$.
\end{theorem}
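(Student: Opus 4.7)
The plan is to reuse the Horn construction from the previous theorem and simply reanalyze the gap as a function of $\cs(f)$ rather than $n$. Since the family $\{f_k\}$ built from the $2$-uniform all-pairs set-cover instance on $k$ elements already gives us an infinite sequence of Horn functions, we only need to show that the quantity $\cs(f_k)/\e(f_k)$ is at least a fixed positive power of $\cs(f_k)$.

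First, I would read off the parameters from the previous proof. With the choice $t = 3\binom{k}{2}$, the bounds established there give
\[
\cs(f_k) \;\geq\; 3\binom{k}{2} + t(k/2) \;=\; \Theta(k^3),
\qquad
\e(f_k) \;\leq\; 3\binom{k}{2} + t \;=\; \Theta(k^2).
\]
For the lower-bound direction on $\cs(f_k)$, I would invoke the fact (already used in the previous theorem, via the result from \cite{bookDraft}) that every minimum CNF representation of $f_k$ must contain all $2\binom{k}{2}$ witness clauses, all $\binom{k}{2}$ feedback clauses, and exactly $t \cdot (k/2)$ amplification clauses. Thus $\cs(f_k)$ is of order exactly $k^3$, not merely bounded above by it.

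Next, I would simply compose the two estimates. Since $\cs(f_k) = \Theta(k^3)$, we have $k = \Theta(\cs(f_k)^{1/3})$, and hence
\[
\frac{\cs(f_k)}{\e(f_k)} \;\geq\; \frac{\Theta(k^3)}{\Theta(k^2)} \;=\; \Theta(k) \;=\; \Theta\bigl(\cs(f_k)^{1/3}\bigr).
\]
Taking $\gamma = 1/3$ (or any smaller positive constant, to absorb the hidden constants for all sufficiently large $k$) and letting $k$ range over the positive integers yields an infinite family satisfying $\cs(f)/\e(f) \geq \cs(f)^{\gamma}$.

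There is essentially no obstacle beyond bookkeeping: the whole content lies in the previous theorem, and the only substantive point is that the upper bound $\cs(f_k) = O(k^3)$ is matched by a lower bound of the same order, which is guaranteed by the structural characterization of minimum Horn-CNFs from \cite{bookDraft} already invoked in the prior proof. Thus the ``stronger'' form claimed in the statement follows essentially for free from the previous construction and analysis.
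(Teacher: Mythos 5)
Your proposal is correct and follows exactly the paper's route: the paper's own proof is a one-line appeal to the preceding Horn construction, noting $\cs(f) = \Theta(k^3)$ and hence $\cs(f)/\e(f) = \Theta(\cs(f)^{1/3})$. Your additional remark that the $\Theta(k^3)$ lower bound on $\cs(f_k)$ is guaranteed by the structural result from \cite{bookDraft} is a useful piece of bookkeeping the paper leaves implicit, but the argument is the same.
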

\begin{proof}
See construction above. Because $\cs(f) = \Theta(k^3)$, $\frac{\cs(f)}{\e(f)} = \Theta(\cs(f)^{1/3}).$
\end{proof}

\section{Acknowledgements}
This work was partially supported by the US Department of Education
GAANN grant P200A090157, and by NSF Grant CCF-0917153.
\bibliographystyle{model1-num-names}\
\bibliography{hellersteinKletenik2}

\begin{thebibliography}{13}
\expandafter\ifx\csname natexlab\endcsname\relax\def\natexlab#1{#1}\fi
\providecommand{\bibinfo}[2]{#2}
\ifx\xfnm\relax \def\xfnm[#1]{\unskip,\space#1}\fi
%Type = Techreport
\bibitem[{\u{C}epek et~al.(2010)\u{C}epek, Ku\u{c}era, and Savick\'{y}}]{cepek}
\bibinfo{author}{O.~\u{C}epek}, \bibinfo{author}{P.~Ku\u{c}era},
  \bibinfo{author}{P.~Savick\'{y}}, \bibinfo{title}{Boolean Functions with a
  simple certificate for {CNF} complexity}, \bibinfo{type}{Technical Report},
  Rutgers Center for Operations Research, \bibinfo{year}{2010}.
%Type = Article
\bibitem[{Coudert(1994)}]{twolevel}
\bibinfo{author}{O.~Coudert},
\newblock \bibinfo{title}{Two-level logic minimization: an overview},
\newblock \bibinfo{journal}{Integration, the VLSI Journal}
  (\bibinfo{year}{1994}).
%Type = Book
\bibitem[{Russell and Norvig(2003)}]{norvig}
\bibinfo{author}{S.~J. Russell}, \bibinfo{author}{P.~Norvig},
  \bibinfo{title}{Artificial Intelligence: A Modern Approach},
  \bibinfo{publisher}{Pearson Education}, \bibinfo{year}{2003}.
%Type = Inproceedings
\bibitem[{Umans(1999)}]{Umans99}
\bibinfo{author}{C.~Umans},
\newblock \bibinfo{title}{Hardness of approximating {$\Sigma_2^p$} minimization
  problems},
\newblock in: \bibinfo{booktitle}{Proc. IEEE Symposium on Foundations of
  Computer Science}, pp. \bibinfo{pages}{465--474}.
%Type = Inproceedings
\bibitem[{Umans(1998)}]{Umans98}
\bibinfo{author}{C.~Umans},
\newblock \bibinfo{title}{The minimum equivalent {DNF} problem and shortest
  implicants},
\newblock in: \bibinfo{booktitle}{{IEEE} Symposium on Foundations of Computer
  Science}, pp. \bibinfo{pages}{556--563}.
%Type = Incollection
\bibitem[{Bhattacharya et~al.(2010)Bhattacharya, DasGupta, Mubayi, and
  Tur{\'a}n}]{Bhattacharya_onapproximate}
\bibinfo{author}{A.~Bhattacharya}, \bibinfo{author}{B.~DasGupta},
  \bibinfo{author}{D.~Mubayi}, \bibinfo{author}{G.~Tur{\'a}n},
\newblock \bibinfo{title}{On approximate horn formula minimization},
\newblock in: \bibinfo{editor}{S.~Abramsky}, \bibinfo{editor}{C.~Gavoille},
  \bibinfo{editor}{C.~Kirchner}, \bibinfo{editor}{F.~Meyer auf~der Heide},
  \bibinfo{editor}{P.~Spirakis} (Eds.), \bibinfo{booktitle}{Automata, Languages
  and Programming}, volume \bibinfo{volume}{6198} of
  \textit{\bibinfo{series}{Lecture Notes in Computer Science}},
  \bibinfo{publisher}{Springer Berlin / Heidelberg}, \bibinfo{year}{2010}, pp.
  \bibinfo{pages}{438--450}.
%Type = Article
\bibitem[{Gimpel(1965)}]{gimpel}
\bibinfo{author}{J.~Gimpel},
\newblock \bibinfo{title}{Method of producing a boolean function having an
  arbitrarily presribed prime impliant table},
\newblock \bibinfo{journal}{{IEEE} Trans. Computers}  (\bibinfo{year}{1965}).
%Type = Masterthesis
\bibitem[{Czort(1999)}]{czort}
\bibinfo{author}{S.~L.~A. Czort}, \bibinfo{title}{The Complexity of Minimizing
  Disjunctive Normal Form Formulas}, Master's thesis, University of Aarhus,
  \bibinfo{address}{Aarhus, Denmark}, \bibinfo{year}{1999}.
%Type = Article
\bibitem[{Allender et~al.(2008)Allender, Hellerstein, McCabe, Pitassi, and
  Saks}]{circuits}
\bibinfo{author}{E.~Allender}, \bibinfo{author}{L.~Hellerstein},
  \bibinfo{author}{P.~McCabe}, \bibinfo{author}{T.~Pitassi},
  \bibinfo{author}{M.~E. Saks},
\newblock \bibinfo{title}{Minimizing disjunctive normal form formulas and
  {AC}$^{\mbox{0}}$ circuits given a truth table},
\newblock \bibinfo{journal}{SIAM J. Comput.} \bibinfo{volume}{38}
  (\bibinfo{year}{2008}) \bibinfo{pages}{63--84}.
%Type = Article
\bibitem[{Angluin et~al.(1992)Angluin, Frazier, and Pitt}]{angluin}
\bibinfo{author}{D.~Angluin}, \bibinfo{author}{M.~Frazier},
  \bibinfo{author}{L.~Pitt},
\newblock \bibinfo{title}{Learning conjunctions of horn clauses},
\newblock \bibinfo{journal}{Machine Learning} \bibinfo{volume}{9}
  (\bibinfo{year}{1992}) \bibinfo{pages}{147--164}.
%Type = Incollection
\bibitem[{Arias and Balc\'{a}zar(2008)}]{ariaslattice}
\bibinfo{author}{M.~Arias}, \bibinfo{author}{J.~L. Balc\'{a}zar},
\newblock \bibinfo{title}{Query learning and certificates in lattices},
\newblock in: \bibinfo{editor}{Y.~Freund}, \bibinfo{editor}{L.~Gy\"{o}rfi},
  \bibinfo{editor}{G.~Tur\'{a}n}, \bibinfo{editor}{T.~Zeugmann} (Eds.),
  \bibinfo{booktitle}{Algorithmic Learning Theory}, volume
  \bibinfo{volume}{5254} of \textit{\bibinfo{series}{Lecture Notes in Computer
  Science}}, \bibinfo{publisher}{Springer Berlin / Heidelberg},
  \bibinfo{year}{2008}, pp. \bibinfo{pages}{303--315}.
%Type = Article
\bibitem[{Arias and Balc\'{a}zar(2011)}]{ariasgdNew}
\bibinfo{author}{M.~Arias}, \bibinfo{author}{J.~Balc\'{a}zar},
\newblock \bibinfo{title}{Construction and learnability of canonical horn
  formulas},
\newblock \bibinfo{journal}{Machine Learning}  (\bibinfo{year}{2011})
  \bibinfo{pages}{1--25}. \bibinfo{note}{10.1007/s10994-011-5248-5}.
%Type = Book
\bibitem[{Crama and Hammer(2011)}]{bookDraft}
\bibinfo{editor}{Y.~Crama}, \bibinfo{editor}{P.~L. Hammer} (Eds.),
  \bibinfo{title}{Boolean Functions: Theory, Algorithms, and Applications},
  \bibinfo{publisher}{Cambridge University Press}, \bibinfo{year}{2011}.

\end{thebibliography}

\end{document}